\newcommand{\N}{{\mathbb N}}
\newcommand{\TN}{2^{\N}}
\newcommand{\B}{{\mathcal B}}
\newcommand{\T}{{\mathcal T}}
\newcommand{\C}{{\mathcal C}}
\newcommand{\ob}{\hat{b}}
\newcommand{\fr}{{^\frown}}
\newcommand{\la}{\langle}
\newcommand{\ra}{\rangle}
\newcommand{\pz}{$\Pi^0_1$\ }
\newcommand{\ml}{Martin-L\"{o}f\ }
\theoremstyle{plain}
\newtheorem{theorem}{Theorem}[section]
\newtheorem{lemma}[theorem]{Lemma}
\newtheorem{claim}[theorem]{Claim}
\newtheorem{corollary}[theorem]{Corollary}
\newtheorem{example}{Example}[section]
\newtheorem{definition}[theorem]{Definition}
\newtheorem{proposition}[theorem]{Proposition}
\newtheorem{remark}[theorem]{Remark}
\begin{document}

\title{Effective Capacity and \\Randomness of Closed Sets}
\author{Douglas Cenzer\thanks{This research was partially supported by NSF
grants DMS 0532644 and 0554841 and 652372.}
\institute{Department of Mathematics, University of Florida,
Gainesville, Florida, USA}
\email{cenzer@ufl.edu}
\and
Paul Brodhead
\institute{Virginia State University,
Petersburg, Virginia, USA}
\email{ brodhe@gmail.com}
}
\def\titlerunning{Effective Capacity}
\def\authorrunning{D. Cenzer \& P. Brodhead}


\maketitle

\begin{abstract}
We investigate the connection between measure and capacity for the
space $\C$ of nonempty closed subsets of $\TN$.  For any computable measure
$\mu^*$, a computable capacity $\T$ may be defined by letting $\T(Q)$ be the
measure of the family of closed sets $K$ which have nonempty
intersection with $Q$. We prove an effective version of Choquet's
capacity theorem by showing that every computable capacity may be obtained from
a computable measure in this way. We establish conditions that
characterize when the capacity of a random closed set equals zero or
is $>0$.  We construct for certain measures an effectively closed set
with positive capacity and with Lebesgue measure zero.

\end{abstract}


\section{Introduction}The study of algorithmic randomness has been an
active area of research in recent years.  The basic problem is to
quantify the randomness of a single real number. Here we think of a
real $r \in [0,1]$ as an infinite sequence of 0's and 1's, i.e as an
element in $\TN$. There are three basic approaches to algorithmic
randomness: the measure-theoretic approach of \ml tests,
the incompressibility approach of Kolmogorov complexity, and the
betting approach in terms of martingales.  All three approaches have
been shown to yield the same notion of (algorithmic) randomness. The
present paper will consider only the measure-theoretic approach.  A
real $x$ is \ml random if for any effective sequence $S_1,
S_2, \dots$ of c. e. open sets with $\mu(S_n) \leq 2^{-n}$, $x \notin
\cap_n S_n$.  For background and history of algorithmic randomness we
refer to \cite{DH:book,Nies:book}.

In a series of recent papers \cite{BCRW08,BCD06,BCDW08,BCR06},
G. Barmpalias, S. Dashti, R. Weber and the authors have defined a
notion of (algorithmic) randomness for closed sets and continuous
functions on $2^{\N}$.  Some definitions are needed.  For a finite
string $\sigma \in\{0,1\}^n$, let $|\sigma| = n$.  For two strings
$\sigma,\tau$, say that $\tau$ \emph{extends} $\sigma$ and write
$\sigma \sqsubseteq \tau$ if $|\sigma| \leq |\tau|$ and $\sigma(i) =
\tau(i)$ for $i < |\sigma|$. For $x \in \TN$, $\sigma \sqsubset x$
means that $\sigma(i) = x(i)$ for $i < |\sigma|$. Let $\sigma^{\frown}
\tau$ denote the concatenation of $\sigma$ and $\tau$ and let
$\sigma^{\frown} i$ denote $\sigma^{\frown}(i)$ for $i=0,1$. Let
$x\lceil n =(x(0),\dots,x(n-1))$. Two reals $x$ and $y$ may be coded
together into $z = x \oplus y$, where $z(2n) = x(n)$ and $z(2n+1) =
y(n)$ for all $n$.
For a finite string $\sigma$, let $I(\sigma)$ denote $\{x \in
\TN:\sigma \sqsubset x\}$. We shall call $I(\sigma)$ the
\emph{interval} determined by $\sigma$. Each such interval is a clopen
set and the clopen sets are just finite unions of intervals. We let
$\B$ denote the Boolean algebra of clopen sets.

Now a closed set $P$ may be identified with a tree
$T_P\subseteq \{0,1\}^*$ where $T_P = \{\sigma: P \cap I(\sigma)
\neq\emptyset\}$. Note that $T_P$ has no dead ends. That is, if
$\sigma\in T_P$, then either $\sigma^{\frown}0 \in T_P$ or
$\sigma^{\frown}1\in T_P$.
set of infinite paths through $T$. It is well-known that $P \subseteq
\TN$ is a closed set if and only if $P = [T]$ for some tree $T$.  $P$
is a $\Pi^0_1$ class, or an effectively closed set, if $P = [T]$ for
some computable tree $T$; equivalently $T_P$ is a \pz set. The
complexity of the closed set $P$ is generally identified with that of
$T_P$.  Thus $P$ is said to be a $\Pi^0_2$ closed set if $T_P$ is
$\Pi^0_2$; in this case $P = [T]$ for some $\Delta^0_2$ tree $T$.  The
complement of a $\Pi^0_1$ class is sometimes called a c.e. open
set. We remark that if $P$ is a $\Pi^0_1$ class, then $T_P$ is a
$\Pi^0_1$ set, but it is not, in general, computable.
There is a natural effective enumeration $P_0, P_1, \dots$ of the
$\Pi^0_1$ classes and thus an enumeration of the c.e. open sets. Thus
we can say that a sequence $S_0,S_1,\dots$ of c.e. open sets is
\emph{effective} if there is a computable function, $f$, such that
$S_n = \TN - P_{f(n)}$ for all $n$. For a detailed development of
$\Pi^0_1$ classes, see~\cite{Ceta}.

It was observed in \cite{BCD06} that there is a natural isomorphism
between the space $\C$ of nonempty closed subsets of $\{0,1\}^{\N}$
and the space $\{0,1,2\}^{\N}$ (with the product topology)
defined as follows. Given a nonempty closed
$Q\subseteq \TN$, let $T = T_Q$ be the tree without dead ends such
that $Q =[T]$. Let $\sigma_0, \sigma_1, \ldots$ enumerate the elements
of $T$ in order, first by length and then lexicographically. We then
define the code $x = x_Q = x_T$ by recursion such that for each $n$,
$x(n) =2$ if both $\sigma_n\fr 0$ and $\sigma_n\fr 1$ are in $T$,
$x(n) =1$ if $\sigma_n\fr 0 \notin T$ and $\sigma_n\fr 1 \in T$, and
$x(n) =0$ if $\sigma_n\fr 0 \in T$ and $\sigma_n\fr 1 \notin T$. For a
finite tree $T \subseteq \{0,1\}^{\leq n}$, the finite code $\rho_T$ is
similarly defined, ending with $\rho_T(k)$ where $\sigma_k$ is the
lexicographically last element of $T \cap \{0,1\}^n$.

We defined in \cite{BCD06} a measure $\mu^*$ on the space ${\mathcal
C}$ of closed subsets of $\TN$ as follows.
\begin{equation}
\mu^*({\mathcal X}) = \mu(\{x_Q:Q \in {\mathcal X}\})
\end{equation}
for any ${\mathcal X} \subseteq {\mathcal C}$ and $\mu$ is the
standard measure on $\{0,1,2\}^{\N}$.  Informally this means that
given $\sigma \in T_Q$, there is probability $\frac13$ that both
$\sigma^{\frown}0 \in T_Q$ and $\sigma^{\frown}1 \in T_Q$ and, for
$i=0,1$, there is probability $\frac13$ that only $\sigma^{\frown}i
\in T_Q$. In particular, this means that $Q \cap I(\sigma) \neq
\emptyset$ implies that for $i=0,1$, $Q \cap I(\sigma^{\frown}i) \neq
\emptyset$ with probability $\frac23$.

Brodhead, Cenzer, and Dashti \cite{BCD06} defined a closed set
$Q\subseteq 2^{\N}$ to be (Martin-L\"{o}f) random if $x_Q$ is
(Martin-L\"{o}f) random.  Note that the equal probability of $\frac13$
for the three cases of branching allows the application of Schnorr's
theorem that \ml randomness is equivalent to prefix-free Kolmogorov
randomness.  Then in \cite{BCD06,BCDW08}, the following results are
proved. No $\Pi^0_1$ class is random but there is a random
$\Delta^0_2$ closed set. Every random closed set contains a random
member but not every member is random. Every random real belongs to
some random closed set. Every random $\Delta^0_2$ closed set contains
a random $\Delta^0_2$ member.  Every random closed set is perfect and
contains no computable elements (in fact, it contains no
$n$-c.e.\ elements).  Every random closed set has measure 0. A random
closed set is a specific type of random recursive construction, as
studied by Graf, Mauldin and Williams \cite{GMW88}. McLinden and
Mauldin \cite{MM09} showed that the Hausdorff dimension of a random
closed set is $log_2(4/3)$.

Just as an effectively closed set in $\TN$ may be viewed as the set of
infinite paths through a computable tree $T \subseteq \{0,1\}^*$, an
algorithmically random closed set in $\TN$ may be viewed as the set of
infinite paths through an algorithmically random tree $T$. Diamondstone and
Kjos-Hanssen \cite{DK09,KH09} give an alternative definition of random closed
sets according to the Galton-Watson distribution and show that this definition
produces the same family of algorithmically random closed sets. The effective
Hausdorff dimension of members of random closed sets is studied in \cite{DK09}.

In the present paper we will examine the notion of computable capacity
and its relation to computable measures on the space $\C$ of nonempty
closed sets.  In section two, we present a family of computable
measures on $\C$ and show how they induce capacities. We define the
notion of computable capacity and present an effective version of
Choquet's theorem that every capacity can be obtained from a measure
$\mu^*$ on the space of closed sets. The main theorem of section three
gives conditions under which the capacity $\T(Q)$ of a $\mu^*$-random
closed set $Q$ is either equal to $0$ or $> 0$.  We also construct a
$\Pi^0_1$ class with Lebesgue measure zero but with positive capacity,
for each capacity of a certain type.

\section{Computable Measure and Capacity on the Space of Closed Sets}

In this section, we describe the hit-or-miss topology on the space
$\C$ of closed sets, we define certain probability measures $\mu_d$ on
the space $\{0,1,2\}^{\N}$ and the corresponding measures $\mu^*_d$ on
the homeomorphic space $\C$. We present an effective version of
Choquet's theorem connecting measure and capacity.

The standard (\emph{hit-or-miss}) topology \cite{D77} (p. 45)
on the space $\C$ of closed sets is given by a sub-basis of sets of two types,
where $U$ is any open set in $2^{\N}$.
\[
V(U) = \{K: K \cap U \neq \emptyset\}; \qquad \qquad W(U) = \{K: K \subseteq U\}
\]

Note that $W(\emptyset) = \{\emptyset\}$ and that $V(\TN) = \C
\setminus \{\emptyset\}$, so that $\emptyset$ is an isolated element
of $\C$ under this topology.  Thus we may omit $\emptyset$ from $\C$
without complications.

A basis for the hit-or-miss topology may be formed by taking finite
intersections of the basic open sets. We want to work with the
following simpler basis.  For each $n$ and each finite tree $A
\subseteq\{0,1\}^{\leq n}$, let

\[
U_A = \{K\in \C: (\forall \sigma \in A) (K \cap I(\sigma) \neq \emptyset)\
\&\ (\forall \sigma \notin A) (K \cap I(\sigma) = \emptyset) \}.
\]
That is,
\[
U_A = \{K \in \C: T_K \cap \{0,1\}^{\leq n} = A\}.
\]
Note that the sets $U_A$ are in fact clopen. That is, for any tree $A
\subseteq \{0,1\}^{\leq n}$, define the tree $A' = \{\sigma \in
\{0,1\}^{\leq n}: (\exists \tau \in \{0,1\}^n \setminus A) \sigma
\sqsubseteq \tau\}$. Then $U_{A'}$ is the complement of $U_A$.

For any finite $n$ and any tree $T \subseteq \{0,1\}^{\leq n}$, define the clopen set
$[T] = \cup_{\sigma \in T} I(\sigma)$. Then $K \cap [T] \neq \emptyset$
if and only if there exists some $A \subseteq \{0,1\}^{\leq n}$ such that $K
\in U_A$ and $A \cap T \neq \emptyset$. That is,
\[
V([T]) = \bigcup\{U_A: A \cap T \neq \emptyset\}.
\]
Similarly, $K \subseteq [T]$
if and only if there exists some $A\subseteq \{0,1\}^n$ such that $K
\in U_A$ and $A \subseteq T$. That is,
\[
W([T]) = \bigcup \{U_A: A \subseteq T\}.
\]
The following lemma can now be easily verified.

\begin{lemma} The family of sets $\{U_A: A \subseteq \{0,1\}^{\leq n}\ \text{a tree}\}$
is a basis of clopen sets for the hit-or-miss topology on $\C$.
\end{lemma}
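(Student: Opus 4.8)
The plan is to verify the three defining properties of a basis of clopen sets: that each $U_A$ is clopen, that the $U_A$ cover $\C$, and that they refine the given sub-basic sets $V(U)$ and $W(U)$ (equivalently, that every basic open set of the hit-or-miss topology is a union of $U_A$'s). The clopenness of each $U_A$ has essentially been handed to us in the paragraph preceding the lemma: for a tree $A \subseteq \{0,1\}^{\leq n}$ the complement of $U_A$ is $U_{A'}$, so $U_A$ is both open and closed once we know these sets are open at all, which follows since $U_A$ is a finite intersection of sets of the form $V(I(\sigma))$ (for $\sigma \in A$) and $\C \setminus V(I(\tau))$ (for the finitely many minimal $\tau \notin A$), all of which are sub-basic or complements of sub-basic open sets.

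Next I would check that the $U_A$ cover $\C$: given any nonempty closed $K$, its tree $T_K$ is a tree without dead ends, so $A := T_K \cap \{0,1\}^{\leq n}$ is a finite tree in $\{0,1\}^{\leq n}$ and $K \in U_A$ by definition. (One should note $A$ is nonempty and really is a subtree, i.e.\ closed under initial segments, which is immediate from the same property for $T_K$.) Moreover the $U_A$ for different $A$ at the same level $n$ are pairwise disjoint, and $U_A$ at level $n$ is the disjoint union of the $U_B$ at level $n+1$ with $B \cap \{0,1\}^{\leq n} = A$; this makes the whole family behave like the cylinders of $\{0,1,2\}^{\N}$ under the coding $x_Q$, which is the conceptual reason the lemma is ``easily verified.''

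The heart of the argument — and the only part requiring a genuine (if short) computation — is showing that every sub-basic open set is a union of $U_A$'s, so that the topology generated by the $\{U_A\}$ contains the hit-or-miss topology; the reverse containment is clear since each $U_A$ is a finite intersection of sub-basic sets. For an arbitrary open $U \subseteq 2^{\N}$, write $U$ as a union of basic clopen sets and reduce to the case $U = [T]$ for a finite tree $T \subseteq \{0,1\}^{\leq n}$ (using that $V$ and $W$ are monotone and that $V(\bigcup_i U_i) = \bigcup_i V(U_i)$, while $W$ of an increasing union is the increasing union of the $W$'s). For such $U = [T]$ the excerpt already records the identities $V([T]) = \bigcup\{U_A : A \cap T \neq \emptyset\}$ and $W([T]) = \bigcup\{U_A : A \subseteq T\}$, so it remains only to confirm those two displayed equalities by the elementary observations that $K \cap [T] \neq \emptyset$ iff some $\sigma \in T$ lies in $T_K$, i.e.\ iff $A \cap T \neq \emptyset$ where $A = T_K \cap \{0,1\}^{\leq n}$, and dually that $K \subseteq [T]$ iff every length-$n$ node of $T_K$ is in $T$.

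I do not anticipate a real obstacle here; the main point to be careful about is the bookkeeping in the reduction from arbitrary open $U$ to finite-level clopen $[T]$ — in particular checking that $W(U)$ for a general open $U$ (not clopen) is still a union of $U_A$'s, which works because $W(U) = \bigcup_n W([T_n])$ for an increasing sequence of finite trees $T_n$ with $\bigcup_n [T_n] = U$, and each $W([T_n])$ is already a union of $U_A$'s. With that in hand the lemma follows immediately.
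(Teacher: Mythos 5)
Your proof is correct and follows essentially the same route the paper intends: the paper gives no explicit argument, merely recording the clopenness of the $U_A$ (via the complement $U_{A'}$) and the identities $V([T])=\bigcup\{U_A: A\cap T\neq\emptyset\}$ and $W([T])=\bigcup\{U_A: A\subseteq T\}$ before declaring the lemma easily verified, and your write-up fills in exactly those steps plus the covering/partition property. The one point worth stating explicitly is that your reduction of $W(U)$ for a general open $U$ to the clopen case $W([T_n])$ relies on compactness of the closed set $K\subseteq\TN$ (so that $K\subseteq\bigcup_n[T_n]$ forces $K\subseteq[T_n]$ for some $n$).
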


Recall the mapping from $\C$ to $\{0,1,2\}^{\N}$ taking $Q$ to $x_Q$.
It can be shown that this is in fact a homeomorphism.
(See Axon \cite{Ax10} for details.) Let $\B^*$ be the family of clopen sets
in $\C$; each set is a finite union of basic sets of the form $U_A$ and thus $\B^*$
is a computable atomless Boolean algebra.

\begin{proposition} The space $\C$ of nonempty closed subsets of $\TN$ is homeomorphic
to the space $\{0,1,2\}^{\N}$. Furthermore, the corresponding map from $\B$ to $\B^*$
is a computable isomorphism.
\end{proposition}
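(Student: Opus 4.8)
The plan is to analyze the explicit map $\Phi\colon\C\to\{0,1,2\}^{\N}$ given by $\Phi(Q)=x_Q$, showing in turn that it is a bijection, that it is a homeomorphism, and that the map it induces on the clopen algebras is computable. That $\Phi$ is well defined is immediate from the absence of dead ends in $T_Q$: at the $n$-th node $\sigma_n$ of the length-then-lexicographic enumeration of $T_Q$ at least one successor lies in $T_Q$, so $x_Q(n)\in\{0,1,2\}$ is always defined. Conversely, reading a word $x\in\{0,1,2\}^{\N}$ from left to right reconstructs a tree $T$ level by level, with $x(n)$ prescribing the branching of the $n$-th node enumerated so far; $T$ has no dead ends by construction, and since it is infinite and finitely branching, K\"{o}nig's lemma gives $[T]\neq\emptyset$, while the absence of dead ends gives $T=T_{[T]}$, so $\Phi([T])=x$. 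Injectivity is clear, since $x_Q$ determines $T_Q$ and $Q=[T_Q]$; hence $\Phi$ is a bijection.

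Next I would establish that $\Phi$ is a homeomorphism by showing that it carries the basis $\{U_A\}$ of the hit-or-miss topology established above onto a basis of $\{0,1,2\}^{\N}$. Fix a finite tree $A\subseteq\{0,1\}^{\le n}$ all of whose leaves lie at level $n$ (the only ones with $U_A\neq\emptyset$) and let $\rho_A$ be its finite code. The crucial point is that the length-then-lexicographic ordering exhausts each level before the next, so the initial segment $x_K\restr|\rho_A|$ depends only on $T_K\cap\{0,1\}^{\le n}$; in fact $T_K\cap\{0,1\}^{\le n}=A$ holds precisely when $x_K$ begins with $\rho_A$, whence $\Phi(U_A)=I(\rho_A)$, the cylinder of $\{0,1,2\}^{\N}$ determined by $\rho_A$. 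These cylinders form a basis, because for any $x\in\{0,1,2\}^{\N}$ the tree it codes is infinite, so the ``level-complete'' prefixes of $x$ (those recording the branching of exactly the nodes of levels below some $n$, each of which is a code $\rho_A$) have unbounded length. Since $\Phi$ is a bijection, $\Phi(U_A)=I(\rho_A)$ makes $\Phi$ open (it sends basic, hence arbitrary, open sets to open sets) and continuous ($\Phi^{-1}(I(\rho_A))=U_A$), so $\Phi$ is a homeomorphism; as $\{0,1,2\}^{\N}$ is compact, $\C$ is compact Hausdorff.

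For the computability assertion, observe that $\Phi$ sends the generators $U_A$ of $\B^*$ to the cylinders $I(\rho_A)$, and that the correspondence $A\leftrightarrow\rho_A$ between canonical indices of finite trees with all leaves at one level and level-complete finite codes is computable in both directions. A general element of $\B^*$ is a finite union of sets $U_A$ with $A$ of a common level, and a general clopen subset of $\{0,1,2\}^{\N}$ is a finite union of cylinders of a common length; one passes between the two representations by refinement --- $U_A=\bigsqcup_{A'}U_{A'}$ over the one-level extensions $A'$ of $A$, and $I(\rho)=\bigsqcup_{i\le 2}I(\rho\fr i)$ --- until the levels and lengths match up, and each such refinement is effective. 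Identifying the algebra of clopen subsets of $\{0,1,2\}^{\N}$ with $\B$ in the standard way, this exhibits the induced map $\B\to\B^*$ as a computable Boolean-algebra isomorphism.

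The only genuine work is in the middle paragraph, namely the verification that $\Phi(U_A)=I(\rho_A)$: this is precisely the coherence of the finite tree codes $\rho$ with the infinite code $x_Q$, and it follows from the level-by-level character of the enumeration by a careful induction on $n$ rather than anything conceptually deep. Once that identification is available, both the homeomorphism and the computable-isomorphism claims reduce to routine bookkeeping with finite Boolean combinations of basic clopen sets.
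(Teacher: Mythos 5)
Your proof is correct. Note that the paper does not actually supply an argument for this proposition --- it states that the map $Q \mapsto x_Q$ ``can be shown'' to be a homeomorphism and defers the details to Axon's dissertation --- so there is no in-paper proof to compare against. Your route is the natural one and is fully consistent with the paper's machinery: the identity you isolate as the crux, $\Phi(U_A)=I(\rho_A)$, i.e.\ $K\in U_A \iff \rho_A \sqsubset x_K$, is precisely the fact the paper invokes (again without proof) in the very next proposition to show that $\mu^*_d$ is computable, and your observation that one must identify $\B$ with the clopen algebra of $\{0,1,2\}^{\N}$ (rather than of $\TN$, as the paper's notation literally suggests) correctly smooths over a small looseness in the statement.
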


Next we consider probability measures $\mu$ on the space $\{0,1,2\}^{\N}$ and the
corresponding measures $\mu^*$ on $\C$ induced by $\mu$.

A probability measure on $\{0,1,2\}^{\N}$ may be defined as in \cite{RSta}
from a function $d: \{0,1,2\}^* \to [0,1]$ such that $d(\lambda) = 1$ and,
for any $\sigma \in\{0,1,2\}^*$,
\[
d(\sigma) = \sum_{i=0}^2 d(\sigma \fr i).
\]
The corresponding measure $\mu_d$ on $\{0,1,2\}^{\N}$ is then defined
by letting $\mu_d(I(\sigma)) = d(\sigma)$. Since the intervals
$I(\sigma)$ form a basis for the standard product topology on
$\{0,1,2\}^{\N}$, this will extend to a measure on all Borel sets.  If
$d$ is computable, then $\mu_d$ is said to be computable. The measure
$\mu_d$ is said to be \emph{nonatomic} or \emph{continuous} if
$\mu_d(\{x\}) = 0$ for all $x \in \{0,1,2\}^{\N}$. We will say that
$\mu_d$ is \emph{bounded} if there exist bounds $b,c \in (0,1)$ such
that, for any $\sigma \in \{0,1,2\}^*$ and $i \in \{0,1,2\}$,
\[
b \cdot d(\sigma) < d(\sigma \fr i) < c \cdot d(\sigma).
\]
It is easy to see that any bounded measure must be continuous. We will
say that the measure $\mu_d$ is \emph{regular} if there exist constants
$b_0,b_1,b_2$ with $b_0+b_1+b_2 = 1$ such that for all $\sigma$ and
for $i \leq 2$, $d(\sigma \fr i) = b_i d(\sigma)$.

Now let $\mu_d^*$ be defined by \[
\mu_d^*({\mathcal X}) = \mu_d(\{x_Q: Q \in {\mathcal X}\}).
\]
Let us say that a measure $\mu^*$ on $\C$ is computable if the
restriction of $\mu^*$ to $\B^*$ is computable.

\begin{proposition}
For any computable $d$, the measure $\mu^*_d$ is a computable measure on $\C$.
\end{proposition}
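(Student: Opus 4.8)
The plan is to reduce the computability of $\mu_d^*$ on $\B^*$ to the computability of $\mu_d$ on the clopen algebra of $\{0,1,2\}^{\N}$, using the computable isomorphism supplied by the earlier proposition. Since every element of $\B^*$ is a finite union of basic clopen sets $U_A$ (as $A$ ranges over trees $A\subseteq\{0,1\}^{\le n}$), and since $\mu_d^*$ is finitely additive on the disjoint pieces $U_A$, it suffices to show that $\mu_d^*(U_A)$ is a computable real, uniformly in a canonical index for the finite tree $A$. Everything then follows, because the homeomorphism $Q\mapsto x_Q$ is a computable isomorphism of Boolean algebras, so a $\B^*$-name for a clopen set effectively yields a name for the corresponding clopen subset of $\{0,1,2\}^{\N}$, and finite additivity lets us assemble the value on a union of $U_A$'s from the values on the pieces.

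First I would unwind what $U_A$ corresponds to under the homeomorphism. Fix $n$ and a tree $A\subseteq\{0,1\}^{\le n}$ with no dead ends below level $n$; enumerate $\sigma_0,\sigma_1,\dots$ the strings of $\{0,1\}^*$ first by length then lexicographically, and recall the finite code $\rho_A$ defined in the excerpt, which has length $k+1$ where $\sigma_k$ is the lexicographically last element of $A\cap\{0,1\}^n$. By definition of the map $T\mapsto x_T$, a closed set $Q$ lies in $U_A$ exactly when the first $|\rho_A|$ symbols of $x_Q$ coincide with $\rho_A$; that is, $\{x_Q : Q\in U_A\} = I(\rho_A)$ as a subset of $\{0,1,2\}^{\N}$. (If $A$ is not a legitimate tree without dead ends below level $n$ then $U_A=\emptyset$ and $\mu_d^*(U_A)=0$; this case is decidable.) Hence
\[
\mu_d^*(U_A) = \mu_d\big(\{x_Q : Q\in U_A\}\big) = \mu_d(I(\rho_A)) = d(\rho_A),
\]
and since $d$ is computable and the map $A\mapsto\rho_A$ is a primitive recursive manipulation of finite strings, the real $\mu_d^*(U_A)$ is computable uniformly in (an index for) $A$.

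Finally, given an arbitrary clopen set $C\in\B^*$, write it via the computable isomorphism as a finite disjoint union $C=\bigcup_{j<m} U_{A_j}$ of basic clopen sets — this decomposition is obtained effectively, since $\B^*$ is a computable Boolean algebra and the $U_A$ for fixed level $n$ partition $\C$, so one can first pull $C$ up to a common level $n$ and then read off which atoms $U_A$ it contains. Then $\mu_d^*(C)=\sum_{j<m} d(\rho_{A_j})$, a finite sum of computable reals obtained uniformly, hence itself computable uniformly in a name for $C$. Therefore the restriction of $\mu_d^*$ to $\B^*$ is computable, which is exactly the assertion that $\mu_d^*$ is a computable measure on $\C$. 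The only mild subtlety — and the one place to be careful — is the bookkeeping that identifies $\{x_Q:Q\in U_A\}$ with the single cylinder $I(\rho_A)$; once the ordering conventions on strings and the recursion defining $x_T$ are matched up, this is immediate, and the rest is routine uniformity and finite additivity. $\qed$
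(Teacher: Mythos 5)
Your proposal is correct and rests on exactly the same key observation as the paper's proof, namely that $K \in U_A$ iff $\rho_A \sqsubset x_K$, so that $\mu_d^*(U_A) = \mu_d(I(\rho_A)) = d(\rho_A)$. The additional bookkeeping you supply (uniformity in $A$ and finite additivity over a disjoint decomposition of an arbitrary clopen set) is a harmless elaboration of what the paper leaves implicit.
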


\begin{proof}
For any tree $A \subseteq \{0,1\}^{\leq n}$, it is easy to see that

\[
K \in U_A \iff \rho_A \sqsubset x_K,
\]
so that $\mu_d^*(U_A) = \mu_d(I(\rho_A))$.
\end{proof}

We are now ready to define capacity.  For details on capacity and random set
variables, see \cite{Ng06}.

\begin{definition}
A \emph{capacity} on $\C$ is a function $\T: \C \to [0,1]$ with
$\T(\emptyset) =0$ such that
\begin{itemize}
\item[(i)] $\T$ is monotone increasing, that is,
\[
Q_1 \subseteq Q_2
  \longrightarrow \T (Q_1) \leq \T(Q_2).
\]
\item[(ii)] $\T$ has the \emph{alternating of infinite order} property, that is,
for $n \geq 2$ and any $Q_1, \dots, Q_n \in \C$
\[
\T(\bigcap_{i=1}^n Q_i) \leq \sum \{(-1)^{|I|+1} \T(\bigcup_{i \in
  I}Q_i): \emptyset \neq I \subseteq \{1,2,\dots,n\} \}.
\]
\item[(iii)] If $Q = \cap_n Q_n$ and $Q_{n+1} \subseteq Q_n$ for all
$n$, then $\T(Q) = lim_{n \to \infty} \T(Q_n)$.
\end{itemize}
\end{definition}

We will also assume, unless otherwise specified, that the capacity
$\T(2^N) = 1$.

We will say that a capacity $\T$ is computable if it is computable on
the family of clopen sets, that is, if there is a computable function $F$ from
the Boolean algebra $\B$ of clopen sets into $[0,1]$ such that
$F(B) = \T(B)$ for any $B \in \B$.

Define $\T_{d}(Q)=\mu_d^*(V(Q))$.  That is, $\T_{d}(Q)$ is the
probability that a randomly chosen closed set meets $Q$. Here is
the first result connecting measure and capacity.

\begin{theorem} \label{th1}
If $\mu^{*}_{d}$ is a (computable) probability measure on $\C$, then
$\T_{d}$ is a (computable) capacity.
\end{theorem}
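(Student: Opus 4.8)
The plan is to verify the three defining clauses of a capacity for $\T_d(Q)=\mu^*_d(V(Q))$ by tracking how the operator $Q\mapsto V(Q)$ behaves under Boolean operations and then invoking elementary properties of the probability measure $\mu^*_d$. Before anything else one checks that these quantities are well defined: for closed $Q$ the set $V(Q)$ is a \emph{closed} subset of $\C$, since its complement $\{K:K\cap Q=\emptyset\}=W(2^{\N}\setminus Q)$ is a subbasic open set of the hit-or-miss topology, and $\mu^*_d$ is defined on all Borel subsets of $\C$ via the homeomorphism with $\{0,1,2\}^{\N}$. Then $V(\emptyset)=\emptyset$ gives $\T_d(\emptyset)=0$, and $V(2^{\N})$ is all of $\C$ (minus the isolated point $\emptyset$), so $\T_d(2^{\N})=1$. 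Clause (i), monotonicity, is immediate: $Q_1\subseteq Q_2$ forces $V(Q_1)\subseteq V(Q_2)$ because a closed set meeting $Q_1$ meets $Q_2$, and $\mu^*_d$ is monotone.

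The substance of clause (ii) rests on two set-theoretic facts about $V$: it commutes with finite unions, $V\big(\bigcup_{i\in I}Q_i\big)=\bigcup_{i\in I}V(Q_i)$ (a closed set hits a finite union iff it hits one of the pieces), whereas for intersections one has only the inclusion $V\big(\bigcap_{i=1}^n Q_i\big)\subseteq\bigcap_{i=1}^n V(Q_i)$. Writing $A_i=V(Q_i)$ and applying the dual form of the inclusion--exclusion principle to the finite measure $\mu^*_d$,
\[
\mu^*_d\Big(\bigcap_{i=1}^n A_i\Big)=\sum\Big\{(-1)^{|I|+1}\mu^*_d\Big(\bigcup_{i\in I}A_i\Big):\emptyset\neq I\subseteq\{1,\dots,n\}\Big\},
\]
we combine $\T_d\big(\bigcap_i Q_i\big)=\mu^*_d\big(V(\bigcap_i Q_i)\big)\le\mu^*_d\big(\bigcap_i A_i\big)$ with $\mu^*_d\big(\bigcup_{i\in I}A_i\big)=\mu^*_d\big(V(\bigcup_{i\in I}Q_i)\big)=\T_d\big(\bigcup_{i\in I}Q_i\big)$ to obtain precisely the alternating inequality. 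The dual identity displayed above is itself routine: it follows from the ordinary inclusion--exclusion formula applied to the complements $A_i^c$, or by a short induction on $n$.

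Clause (iii) is the one place where compactness of $2^{\N}$ does real work. Given closed sets $Q=\bigcap_n Q_n$ with $Q_{n+1}\subseteq Q_n$, I claim $V(Q)=\bigcap_n V(Q_n)$. The inclusion from left to right is trivial; conversely, if a closed set $K$ meets every $Q_n$, then $\{K\cap Q_n\}_{n}$ is a decreasing sequence of nonempty compact subsets of $2^{\N}$, so $\bigcap_n(K\cap Q_n)=K\cap Q\neq\emptyset$, i.e. $K\in V(Q)$. Thus the $V(Q_n)$ form a decreasing sequence of Borel sets with intersection $V(Q)$, and continuity from above of the finite measure $\mu^*_d$ yields $\T_d(Q)=\lim_{n\to\infty}\T_d(Q_n)$.

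For the computability assertion, suppose $\mu^*_d$ is computable on $\B^*$. A clopen $B\in\B$ has the form $B=[T]$ for a finite tree $T\subseteq\{0,1\}^{\le n}$, and as recorded earlier in the excerpt $V([T])=\bigcup\{U_A:A\cap T\neq\emptyset\}$, which is a member of $\B^*$ that can be computed uniformly from (a code for) $T$. Hence $B\mapsto\T_d(B)=\mu^*_d(V([T]))$ is computable, so $\T_d$ is a computable capacity. The only genuinely non-formal ingredients in the whole proof are the dual inclusion--exclusion identity behind (ii) and the compactness argument behind (iii); I expect the latter to be the step most worth stating carefully, since it is what makes $V$ pass through the infinite descending intersection.
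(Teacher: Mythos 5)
Your proof is correct and follows essentially the same route as the paper: monotonicity and $\T_d(\emptyset)=0$ are immediate, clause (iii) is handled by the same compactness argument showing $V(Q)=\bigcap_n V(Q_n)$ followed by continuity from above, and the computability claim reduces to computing $\mu^*_d$ on the clopen set $V([T])\in\B^*$. The only difference is that you spell out the ``basic probability'' behind clause (ii) explicitly, via $V(\bigcup_i Q_i)=\bigcup_i V(Q_i)$, $V(\bigcap_i Q_i)\subseteq\bigcap_i V(Q_i)$, and the dual inclusion--exclusion identity, which is a welcome amount of added detail but not a different argument.
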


\begin{proof}
Certainly $\T_d(\emptyset) = 0$. The
alternating property follows by basic probability. For (iii), suppose
that $Q = \cap_n Q_n$ is a decreasing intersection. Then by
compactness, $Q \cap K \neq \emptyset$ if and only if $Q_n \cap K \neq
\emptyset$ for all $n$. Furthermore, $V(Q_{n+1}) \subseteq V(Q_n)$ for
all $n$. Thus

\[
\T_d(Q) = \mu^*_d(V(Q)) = \mu^*_d(\cap_n V(Q_n)) = lim_n \mu^*_d(V(Q_n)) =
lim_n\T_d(Q_n).
\]
If $d$ is computable, then $\T_d$ may be computed as
follows.  For any clopen set $I(\sigma_1) \cup \dots \cup
I(\sigma_k)$ where each $\sigma_i \in \{0,1\}^n$, we compute the
probability distribution for all trees of height $n$ and add the
probabilities of those trees which contain one of the
$\sigma_i$.
\end{proof}

Choquet's Capacity Theorem states that any capacity $\T$ is determined by a measure,
that is $\T = \T_d$ for some $d$. See \cite{Ng06} for details. We now give an
effective version of Choquet's theorem.

\begin{theorem} [Effective Choquet Capacity Theorem] \label{th2}
If $\T$ is a computable capacity, then there is a computable measure
$\mu_d^*$ on the space of closed sets such that $\T =
\T_d$. \end{theorem}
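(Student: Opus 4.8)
The plan is to reconstruct the measure $\mu_d^*$ on $\C$ from the computable capacity $\T$ by specifying the underlying function $d:\{0,1,2\}^*\to[0,1]$ level by level, using the combinatorial correspondence between finite trees and finite codes. Recall that each finite code $\rho$ of length $k$ (ending at the lexicographically last node of some finite tree $A\subseteq\{0,1\}^{\le n}$) corresponds bijectively to a finite tree $A=A_\rho$, and $U_A=\{K:T_K\cap\{0,1\}^{\le n}=A\}$ is clopen with $K\in U_A\iff \rho_A\sqsubset x_K$. So defining $\mu_d^*$ is the same as assigning, for each $n$, a probability $\mu_d^*(U_A)$ to each tree $A$ of height $n$, consistently (the values for trees of height $n+1$ refining a given height-$n$ tree must sum to the value at height $n$), and computably. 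The goal is to choose these numbers so that for every clopen $B\in\B$ (a finite union of intervals $I(\sigma)$, $\sigma\in\{0,1\}^n$) we recover $\T(B)=\T_d(B)=\mu_d^*(V(B))=\sum\{\mu_d^*(U_A):A\cap B\neq\emptyset\}$.

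The key observation is that this last equation, ranging over all clopen $B$ built from level-$n$ intervals, is a \emph{finite linear system} relating the unknowns $\{\mu_d^*(U_A):A\subseteq\{0,1\}^{\le n}\}$ to the known quantities $\{\T(B):B$ a clopen union of level-$n$ intervals$\}$. First I would show this system has a (unique) solution with all unknowns nonnegative. The natural route is inclusion–exclusion: just as in finite Choquet theory, set $m_n(A):=\mu_d^*(U_A)$ to be an alternating sum of values $\T(B)$ over the $B$'s determined by $A$ — concretely, writing $V_A=\{B:A\subseteq B'\text{'s complement pattern}\}$, one solves for $m_n(A)$ via Möbius inversion over the lattice of clopen sets at level $n$. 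Nonnegativity of the resulting $m_n(A)$ is exactly what the alternating-of-infinite-order property (ii) of $\T$, specialized to the finitely many level-$n$ intervals, delivers; monotonicity (i) and $\T(2^\N)=1$ give $m_n(A)\ge 0$ summing to $1$. Consistency across levels — that $\sum_{A'\text{ refines }A}m_{n+1}(A')=m_n(A)$ — follows because both sides compute the same $\mu_d^*$-measure, or directly from property (iii) applied to the clopen refinement; and computability of $d(\rho_A)=m_n(A)$ is immediate since each $m_n(A)$ is a fixed finite rational-coefficient combination of values of the computable function $F$ representing $\T$ on $\B$.

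Having defined $d$ this way, I would verify $d(\lambda)=1$ and $d(\rho)=\sum_{i\le 2}d(\rho\fr i)$ (this is precisely the cross-level consistency, reindexed through the code correspondence), so that $\mu_d^*$ is a bona fide computable measure on $\C$ by Proposition stated above. Then $\T_d(B)=\mu_d^*(V(B))=\T(B)$ holds for all clopen $B$ by construction. Finally, to get $\T=\T_d$ on \emph{all} of $\C$, not just clopen sets: every $Q\in\C$ is a decreasing intersection $Q=\bigcap_n [T_Q\cap\{0,1\}^{\le n}]$ of clopen sets, so by property (iii) for $\T$ and the already-established part (iii) for $\T_d$ (from Theorem~\ref{th1}), $\T(Q)=\lim_n\T([T_Q\restr n])=\lim_n\T_d([T_Q\restr n])=\T_d(Q)$.

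\textbf{The main obstacle} I expect is proving nonnegativity of the inverted values $m_n(A)$ rigorously — i.e., extracting, from the alternating-of-infinite-order inequality (ii) as stated (which is an inequality about intersections bounded by alternating sums over unions), the exact Möbius-inversion positivity needed. This is the classical heart of Choquet's theorem in the finite case, and the work is in setting up the correct lattice/indexing so that (ii) applied to the complements of the level-$n$ intervals yields $m_n(A)\ge 0$; once the finite-case identity is in hand, the effectivization is routine since everything in sight is a finite computation on the computable data $F$.
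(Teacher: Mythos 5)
Your proposal is correct in outline, but it takes a genuinely different route from the paper, so a comparison is in order. The paper's proof is a node-by-node recursion: at each node $\sigma$ of the tree it reads off $a_i = \T(I(\sigma\fr i))/\T(I(\sigma))$ and assigns the three branching probabilities $1-a_1,\ 1-a_0,\ a_0+a_1-1$ (nonnegative by monotonicity and the two-set instance of the alternating property), so it consults $\T$ only on single intervals and produces a measure with conditionally independent branching. You instead solve, at each level $n$, the full finite linear system relating the unknowns $\mu_d^*(U_A)$ to the known values $\T(B)$ over \emph{all} clopen $B$ generated by level-$n$ intervals, via M\"obius inversion over the subset lattice of $\{0,1\}^n$. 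What each buys: the paper's recursion is short and makes computability of $d$ transparent, but as written it only guarantees $\T=\T_d$ on intervals; for a capacity that is not of product type (e.g.\ a convex combination of two distinct regular capacities), agreement on intervals does not by itself force agreement on unions such as $I(00)\cup I(11)$, so your level-$n$ system is doing real work that the paper's argument elides. Your route is the classical finite Choquet argument, and it also supplies two steps the paper omits entirely: the cross-level (Kolmogorov) consistency of the finite-dimensional distributions, and the passage from clopen sets to arbitrary closed sets via continuity along decreasing intersections (property (iii) on both sides). The one genuine gap you leave is the one you flag yourself: deriving $m_n(A)\geq 0$ from property (ii). That derivation --- complete alternation of $\T$ is equivalent to complete monotonicity of the avoidance functional $S\mapsto 1-\T([S^c])$ on subsets $S$ of $\{0,1\}^n$, whose M\"obius inverse is then nonnegative --- is standard but must actually be written out, since it is the mathematical heart of the theorem; everything else in your outline is routine bookkeeping on computable data.
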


\begin{proof}
Given the values $\T(U)$ for all clopen sets $I(\sigma_1)\cup \dots
\cup I(\sigma_k)$ where each $\sigma_i \in \{0,1\}^n$, there is in
fact a unique probability measure $\mu_d$ on these clopen sets such
that $\T = \T_d$ and this can be computed as follows.

Suppose first that $\T(I(i)) = a_i$ for $i < 2$ and note that each
$a_i \leq 1$ and $a_0 + a_1 \geq 1$ by the alternating property.  If
$\T = \T_d$, then we must have $d((0)) + d((2)) = a_0$ and $d((1)) +
d((2)) = a_1$ and also $d((0)) + d((1)) + d((2)) = 1$, so that $d((2))
= a_0 + a_1 - 1$, $d((0)) = 1 - a_1$ and $d((1)) = 1 - a_0$. This will
imply that $\T(I\tau)) = \T_d(I(\tau))$ when $|\tau| = 1$. Now suppose that
we have defined $d(\tau)$ and that $\tau$ is the code for a finite
tree with elements $\sigma_0,\dots,\sigma_n =\sigma$ and thus $d(\tau
\fr i)$ is giving the probability that $\sigma$ will have one or both
immediate successors. We proceed as above. Let $\T(I(\sigma \fr i)) =
a_i \cdot \T(I(\sigma))$ for $i<2$. Then as above $d(\tau \fr 2) =
d(\tau) \cdot (a_0 + a_1 - 1)$ and $d(\tau \fr i) = d(\tau) \cdot (1 -
a_i)$ for each $i$.
\end{proof}

\section{When is $\T(Q)=0$?}

In this section, we compute the capacity of a random closed set under certain
probability measures. We construct a \pz class with measure zero but
with positive capacity.

We say that $K \in \C$ is \emph{$\mu_d^*$-random} if $x_K$ is \ml random
with respect to the measure $\mu_d$. (See \cite{RSta} for details.)

Our next  result shows that the $\T_d$ capacity of a $\mu^*_d$-random closed set
depends on the particular measure.

\begin{theorem}\label{th4}
Let $d$ be the uniform measure with $b_0 = b_1 = b > 0$ and $b_2 =
1-2b > 0$ and let $\ob = 1 - \frac{\sqrt 2}2$.  Then
\begin{itemize}
\item[(a)] If $b \geq \ob$, then
for any $\mu_d^*$-random closed set $R$, $\T_d(R) = 0$.
\item[(b)] If $b < \ob$, then there is a
$\mu_d^*$-random closed set $R$ with $\T_d(R) > 0$.
\end{itemize}
\end{theorem}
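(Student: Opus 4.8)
The plan is to reduce everything to the behaviour of a Galton--Watson process. For closed sets $R,K$ chosen independently according to $\mu_d^*$ we have $K\cap R\neq\emptyset$ iff $[T_R\cap T_K]\neq\emptyset$ iff (by König's lemma) the tree $T_R\cap T_K$ is infinite; and $\T_d(R)=\mu_d^*(V(R))$ is exactly the $\mu_d^*$-probability, over $K$, that $K\cap R\neq\emptyset$. Since $d$ is regular, $\mu_d$ is the Bernoulli product measure on $\{0,1,2\}^{\N}$ with weights $(b,b,1-2b)$, so in $T_R$ (and independently in $T_K$) the branching pattern at distinct nodes is i.i.d.: a node has left child only, right child only, or both, with probabilities $b,b,1-2b$. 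A short case count then shows that, conditioned on $\sigma\in T_R\cap T_K$, the number of children of $\sigma$ in $T_R\cap T_K$ is $0,1,2$ with probabilities $2b^2,\ 4b-6b^2,\ (1-2b)^2$, and that these counts are independent across nodes, so $T_R\cap T_K$ is a Galton--Watson tree with offspring mean $m=2(1-b)^2$. One checks $2(1-b)^2>1\iff b<\ob$ and $=1\iff b=\ob$, so the threshold in the theorem is precisely the critical/supercritical boundary of this process.

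For part (b), assume $b<\ob$; then $m>1$ and the process is supercritical and non-degenerate (its death probability is $p_0=2b^2\in(0,1)$), so its extinction probability is some $q\in(0,1)$. By Fubini's theorem,
\[
\int_{\C}\T_d(R)\,d\mu_d^*(R)=(\mu_d^*\times\mu_d^*)\bigl(\{(R,K):R\cap K\neq\emptyset\}\bigr)=1-q>0,
\]
so $\{R:\T_d(R)>0\}$ has positive $\mu_d^*$-measure. Since the non-$\mu_d^*$-random closed sets form a $\mu_d^*$-null set and $\{R:\T_d(R)>0\}$ is Borel, it contains a $\mu_d^*$-random $R$, giving the required witness. (If a hands-on construction is preferred one can build $R$ in stages, simultaneously diagonalizing against a universal $\mu_d^*$-Martin--L\"of test and keeping a positive-mean subtree of $T_R$ alive, but the measure argument already yields the stated existence claim.)

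For part (a), assume $b\ge\ob$; then $m\le1$, so $T_R\cap T_K$ dies out almost surely and $(\mu_d^*\times\mu_d^*)(\{(R,K):R\cap K\neq\emptyset\})=0$, whence by Fubini $\T_d(R)=0$ for $\mu_d^*$-almost every $R$. The real work is to upgrade ``almost every'' to ``every $\mu_d^*$-random'', i.e.\ to cover $\{R:\T_d(R)>0\}$ by a $\mu_d^*$-Martin--L\"of test. For $n\in\N$ let $f_n(R)$ be the $\mu_d^*$-measure of the set of $K$ whose tree meets $T_R$ at level $n$; then $f_n$ depends only on $T_R\restr n$, hence is a computable locally constant function of $x_R$, and $f_n(R)$ decreases to $\T_d(R)$. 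Moreover $\int f_n\,d\mu_d^*=g_n$, where $g_n$ is the probability that the intersection Galton--Watson tree reaches generation $n$; $g_n$ is a computable sequence (iterate the offspring generating function) and, since $m\le1$, $g_n\to0$. Because $g_n$ is computable and tends to $0$, from $k,j$ we can compute $n(k,j)$ with $g_{n(k,j)}<2^{-k-2j}$; Markov's inequality then gives $\mu_d^*\{R:f_{n(k,j)}(R)\ge 2^{-j}\}<2^{-k-j}$, so $V_k:=\bigcup_j\{R:f_{n(k,j)}(R)\ge 2^{-j}\}$ are uniformly c.e.\ open with $\mu_d^*(V_k)\le 2^{-k}$, and $\{R:\T_d(R)>0\}=\bigcup_j\{R:\T_d(R)\ge 2^{-j}\}\subseteq\bigcup_j\{R:f_{n(k,j)}(R)\ge 2^{-j}\}=V_k$ for every $k$ (using $f_n\ge\T_d$). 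Hence no $\mu_d^*$-random $R$ lies in $\bigcap_k V_k$, so $\T_d(R)=0$ for every such $R$.

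The main obstacle is the critical case $b=\ob$ of part (a): there $T_R\cap T_K$ is critical rather than subcritical, so $g_n\to0$ only polynomially and no explicit rate is available by hand; the escape is that $g_n$ is a computable sequence tending to $0$, so an effective rate — and hence the test above — is obtained by unbounded search. The remaining care needed is in the two structural reductions invoked throughout: that $R\cap K\neq\emptyset$ is equivalent to infiniteness of $T_R\cap T_K$ (compactness), and that this intersection process is genuinely Galton--Watson (independence of offspring across nodes, which is exactly where regularity of $d$ enters), so that the classical extinction dichotomy $m\le1$ versus $m>1$ applies.
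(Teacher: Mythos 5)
Your proof is correct and is essentially the paper's own argument in Galton--Watson clothing: your survival recursion for $T_R\cap T_K$ is exactly the paper's iteration $p_{n+1}=f(p_n)$ with $f(p)=(2b^2-4b+2)p-(1-2b)^2p^2$, your criticality threshold $2(1-b)^2\le 1$ is the same condition the paper obtains by locating the positive fixed point of $f$, your Markov-inequality bound on $\mu_d^*\{R: f_{n}(R)\ge 2^{-j}\}$ is the paper's Claim that $\mu^*(A_{m,n})\le 2^n p_m$, and your Fubini/positive-measure argument for part (b) matches the paper's concluding claims. The offspring-mean packaging (including your handling of the critical case by unbounded search for an effective rate) is a clean conceptual gloss but does not change the substance of the argument.
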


\begin{proof}
Fix $d$ as described above so that $d(\sigma \fr i) = d(\sigma) \cdot
b$ and let $\mu^* = \mu_d^*$. We will compute the probability,
given two closed sets $Q$ and $K$, that $Q \cap K$ is nonempty.
Here we define the usual product measure
on the product space $\C \times \C$ of pairs $(Q,K)$ of nonempty closed sets
by letting $\mu^2(U_A \times U_B) = \mu^*(U_A) \cdot \mu^*(U_B)$
for arbitrary  subsets $A,B$ of $\{0,1\}^n$.

Let
\[
Q_n = \bigcup \{I(\sigma): \sigma \in \{0,1\}^n\ \&\ Q \cap I(\sigma)\neq \emptyset\}
\]
and similarly for $K_n$. Then $Q \cap K \neq \emptyset$ if and only if
$Q_n \cap K_n \neq \emptyset$ for all $n$. Let $p_n$ be the
probability that $Q_n \cap K_n \neq \emptyset$ for two arbitrary
closed sets $K$ and $Q$, relative to our measure $\mu^*$. It is
immediate that $p_1 = 1 - 2b^2$, since $Q_1 \cap K_1 = \emptyset$ only
when $Q_1 = I(i)$ and $K_1 = I(1-i)$. Next we will determine the
quadratic function $f$ such that $p_{n+1} = f(p_n)$. There are 9
possible cases for $Q_1$ and $K_1$, which break down into 4 distinct
cases in the computation of $p_{n+1}$.

\medskip

{\bf Case (i)}: As we have seen, $Q_1 \cap K_1 = \emptyset$ with
probability $1 - 2b^2$.

\medskip

{\bf Case (ii)}: There are two chances that $Q_1 = K_1 = I(i)$, each
with probability $b^2$ so that $Q_{n+1} \cap K_{n+1} \neq \emptyset$
with (relative) probability $p_n$.

\medskip

{\bf Case (iii)}: There are four chances where $Q_1 = \TN$ and $K_1
=I(i)$ or vice versa, each with probability $b \cdot (1-2b)$, so that
once again $Q_{n+1} \cap K_{n+1} \neq\emptyset$ with relative probability
$p_n$.

\medskip

{\bf Case (iv)}: There is one chance that $Q_1 = K_1 = \TN$, with
probability $(1 - 2b)^2$, in which case $Q_{n+1} \cap K_{n+1} \neq
\emptyset$ with relative probability $1 - (1 -p_n)^2 = 2p_n -
p_n^2$. This is because $Q_{n+1} \cap K_{n+1} = \emptyset$ if and only
if both $Q_{n+1} \cap I(i) \cap K_{n+1} = \emptyset$ for both $i=0$
and $i=1$.

\medskip

Adding these cases together, we see that
\[
p_{n+1} = [2b^2 + 4b (1-2b)] p_n + (1 - 2b)^2 (2p_n - p_n^2) = (2b^2 -
4b + 2) p_n  - (1 - 4b + 4b^2) p_n^2.
\]

Next we investigate the limit of the computable sequence $<p_n>_{n \in
 \omega}$. Let $f(p) =  (2b^2 - 4b + 2) p  - (1 - 4b + 4b^2) p^2$.
Note that $f(0) = 0$ and $f(1) = 1 - 4b^2 < 1$.
It is easy to see that the fixed points of $f$ are $p=0$ and $p =
 \frac{2b^2 -4b+1}{(1-2b)^2}$. Note that since $b < \frac 12$, the
 denominator is not zero and hence is always positive.

Now consider the function $g(b) = 2b^2 - 4b +1 = 2 (b-1)^2 - 1$, which has
positive root $\ob$ and is decreasing for $0 \leq b \leq 1$. There are
three cases to consider when comparing $b$ with $\ob$.

\medskip

{\bf Case 1}: If $b > \ob$, then $g(b) < 0$ and hence the other fixed
point of $f$ is negative. Furthermore, $2b^2 - 4b +2 < 1$ so that
$f(p) < p$ for all $p > 0$. It follows that the sequence $\{p_n: n \in
\N\}$ is decreasing with lower bound zero and hence must converge to a
fixed point of $f$ (since $p_{n+1} = f(p_n)$). Thus $lim_n p_n = 0$.

{\bf Case 2}: If $b = \ob$, then $g(b) = 0$ and $f(p) = p - (4b-1)
p^2$, so that $p=0$ is the unique fixed point of $f$. Furthermore,
$4b-1 = 3 - 2 \sqrt2 > 0$, so again $f(p) < p$ for all $p$. It follows
again that $lim_n p_n = 0$.

In these two cases, we can define a \ml test to prove that $T_d(R) =
0$
for any $\mu$-random closed set $R$.

For each $m, n \in \omega$, let
\[
B_m = \{(K,Q): K_m \cap Q_m \neq \emptyset\},
\]
so that $\mu^*(B_m) = p_m$ and let
\[
A_{m,n} = \{Q: \mu^*(\{K: K_m \cap Q_m \neq \emptyset\}) \geq
2^{-n}\}.
\]

\begin{claim} \label{c1} For each $m$ and $n$, $\mu^*(A_{m,n}) \leq 2^n \cdot p_m$.
\end{claim}

\emph{Proof of Claim \ref{c1}}. Define the Borel measurable function $F_m: \C
\times \C: \to \{0,1\}$ to be the characteristic function of
$B_m$. Then
\[
p_m = \mu^2(B_m) = \int_{Q \in \C} \int_{K \in \C} F(Q,K) dK dQ.
\]
Now for fixed $Q$,
\[
\mu^*(\{K: K_m \cap Q_m \neq \emptyset\}) = \int_{K \in \C} F(Q,K) dK,
\]
so that for $Q \in A_{m,n}$, we have $\int_{K \in \C} F(Q,K) dK \geq 2^{-n}$.
It follows that
\begin{align*}
p_m = \int_{Q \in \C} \int_{K \in \C} F(Q,K) dK dQ &\geq \int_{Q \in
  A_{m,n}} \int_{K \in \C} F(Q,K) dK dQ \\
&\geq \int_{Q \in A_{m,n}}
2^{-n} dQ = 2^{-n} \mu^*(A_{m,n}).
\end{align*}
Multiplying both sides by $2^n$ completes the proof of Claim \ref{c1}. $\qed$

\medskip

Since the computable sequence $<p_n>_{n \in \omega}$ converges to 0,
there must be a computable subsequence $m_0,m_1,\dots$ such that
$p_{m_n} < 2^{-2n-1}$ for all $n$. We can now define our \ml test. Let

\[
S_r = A_{m_r,r}
\]
and let
\[
V_n = \cup_{r>n} S_r.
\]
It follows that
\[
\mu^*(A_n) \leq 2^{n+1}\mu^*(B_{m_n}) < 2^{n+1} 2^{-2n-1} = 2^{-n}
\]
and therefore
\[
\mu^*(V_n) \leq \sum_{r>n} 2^{-r} = 2^{-n}
\]
Now suppose that $R$ is a random closed set. The sequence $\la V_n
\ra_{n \in \omega}$ is a computable sequence of c.e. open sets with
measure $\leq 2^{-n}$, so that there is some $n$ such that $R \notin
S_n$. Thus for all $r > n$, $\mu^*(\{K: K_{m_r} \cap R_{m_r} \neq
\emptyset\}) < 2^{-r}$ and it follows that
\[
\mu^*(\{K: K \cap R \neq \emptyset\}) = lim_n \mu^*(\{K: K_{m_n} \cap
R_{m_n} \neq \emptyset\}) = 0.
\]
Thus $\T_d(R) = 0$, as desired.

\medskip

{\bf Case 3}:
Finally, suppose that $b < \ob$. Then $0 < 2b^2 - 4b +1 < 1$, so that
$f$ has a positive fixed point $m_b = \frac{2b^2 -4b+1}{(1-2b)^2}$.
It is clear that $f(p) > p$ for $0 < p < m_b$ and $f(p) < p$ for $m_b
< p$. Furthermore, the function $f$ has its maximum at $p =
[\frac{1-b}{1-2b}]^2 > 1$, so that $f$ is monotone increasing on
$[0,1]$ and hence $f(p) > f(m_b) = m_b$ whenever $p > m_b$.  Observe that $p_0
= 1 > m_b$ and hence the sequence $\{p_n: n \in \N\}$ is decreasing
with lower bound $m_b$. It follows that $lim_n p_n = m_b > 0$.

Now $B = \{(Q,K): Q \cap K \neq \emptyset\} = \cap_n B_n$ is the
intersection of a decreasing sequence of sets and hence $\mu^2(B) =
lim_n p_ = m_b >0$.

\begin{claim} \label{c2} $\mu^*(\{Q: \mu^*(\{K: K \cap Q \neq
\emptyset\}) > 0\}) \geq m_b$.
\end{claim}

\emph{Proof of Claim \ref{c2}}. Let $B = \{(K,Q): K \cap Q \neq \emptyset$,
let $A = \{Q: \mu^*(\{K: K \cap Q \neq \emptyset\}) > 0\}$ and suppose
that $\mu^*(A) < m_b$.  As in the proof of Claim \ref{c1}, we have
\[
m_b = \mu^2(B) = \int_{Q \in \C} \int_{K \in \C} F(Q,K) dK dQ.
\]
For $Q \notin A$, we have $\int_{K \in Q} F(Q,K) dK = \mu^*(\{K: K
\cap Q \neq \emptyset\}) = 0$, so that
\[
m_b = \int_{Q \in A} \int_{K \in Q} F(Q,K) dK dQ \leq \int_{Q \in A} dQ = \mu^*(A),
\]
which completes the proof of Claim \ref{c2}. $\qed$

\begin{claim} \label{c3} $\{Q: \T_d(Q) \geq m_b\}$ has positive measure.
\end{claim}

\emph{Proof of Claim \ref{c3}}. Recall that $T_d(Q) = \mu^*(\{K: Q \cap K \neq \emptyset\})$.
Let $B = \{(K,Q): K \cap Q \neq \emptyset$, let $A = \{Q: T_d(Q) \geq m_b\}$
and suppose that $\mu^*(A) = 0$. As
in the proof of Claim \ref{c1}, we have
\[
m_b = \mu^2(B) = \int_{Q \in \C} T_d(Q) dQ.
\]
Since $\mu^*(A) = 0$, it follows that for any $B \subseteq \C$, we have
\[
\int_{Q \in B} T_d(Q) dQ \leq m_b \mu^*(B).
\]
Furthermore, $T_d(Q) < m_b$ for almost all $Q$, so there exists some $P$ with
$T_d(P) < m_b - \epsilon$ for some positive $\epsilon$. This means that for some
$n$, $\mu^*(\{K: P_n \cap K_n \neq \emptyset\}) < m_b - \epsilon$. Then for \emph{any}
closed set $Q$ with $Q_n = P_n$, we have $T_d(Q) < m_b - \epsilon$. But
$E = \{Q: Q_n = P_n\}$ has positive measure, say $\delta > 0$. Then we have

\begin{align*}
m_b = \int_{Q \in \C} T_d(Q) dQ &= \int_{Q \in E} T_d(Q) dQ\ +\ \int_{Q \notin E} T_d(Q) dQ \\
&\leq \ \delta (m_b - \epsilon) + (1- \delta) m_b = m_b - \epsilon \delta < m_b.
\end{align*}

This contradiction demonstrates Claim \ref{c3}. $\qed$

Since the set of $\mu*$-random closed sets has measure one, there must
be a random closed set $R$ such that $\T_d(R) \geq m_b$ and in
particular, there is a $\mu^*$-random closed set with positive capacity.
\end{proof}

Thus for certain measures, there exists a random closed set with
measure zero but with positive capacity. For the standard measure, a random closed set
has capacity zero.

\begin{corollary}
\label{th3}
Let $d$ be the uniform measure with $b_0 = b_1 = b_2 = \frac13$. Then
for any $\mu_d^*$-random closed set $R$, $\T_d(R) = 0$.
\end{corollary}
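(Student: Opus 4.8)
The plan is to derive this directly from Theorem~\ref{th4}(a), which gives $\T_d(R) = 0$ for every $\mu_d^*$-random closed set $R$ whenever $b \geq \ob$, where in the notation there $b = b_0 = b_1$ and $\ob = 1 - \frac{\sqrt 2}{2}$. The standard (uniform) measure has $b_0 = b_1 = b_2 = \frac13$, so in the notation of Theorem~\ref{th4} we take $b = \frac13$; then automatically $b_2 = 1 - 2b = \frac13$, as required, and all three branching probabilities are positive with $b < \frac12$.

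It remains only to check the numerical inequality $\frac13 \geq \ob$. Since $\ob = 1 - \frac{\sqrt 2}{2}$, this is equivalent to $\frac{\sqrt 2}{2} \geq \frac23$, i.e.\ $\sqrt 2 \geq \frac43$, i.e.\ $2 \geq \frac{16}{9}$, which holds — in fact strictly, so that $\frac13 > \ob \approx 0.293$ and we are squarely in Case~1 of the proof of Theorem~\ref{th4}. Applying Theorem~\ref{th4}(a) with this value of $b$ then yields $\T_d(R) = 0$ for every $\mu_d^*$-random closed set $R$, which is exactly the claim. There is no real obstacle here beyond this arithmetic check; the content is entirely carried by Theorem~\ref{th4}.

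If a self-contained argument is preferred, one can simply specialize the proof of Theorem~\ref{th4} to $b = \frac13$: the recursion $p_{n+1} = f(p_n)$ has $f(p) = \frac{8}{9}p - \frac19 p^2$ with $p_1 = \frac79$, the only nonnegative fixed point of $f$ is $p = 0$ (the other, $\frac{2b^2 - 4b + 1}{(1-2b)^2} = -1$, being negative since $g(\frac13) = -\frac19 < 0$), and $f(p) < p$ for all $p > 0$ because $2b^2 - 4b + 2 = \frac89 < 1$; hence the computable sequence $\la p_n \ra_{n\in\N}$ is decreasing with lower bound $0$ and converges to $0$. Then, exactly as in Cases~1 and~2 of Theorem~\ref{th4}, one picks a computable subsequence with $p_{m_r} < 2^{-2r-1}$, forms the \ml test $V_n = \bigcup_{r > n} A_{m_r, r}$, and observes that any $\mu_d^*$-random $R$ eventually escapes it, forcing $\mu_d^*(\{K : K \cap R \neq \emptyset\}) = \lim_n p_n = 0$ and so $\T_d(R) = 0$. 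Invoking Theorem~\ref{th4}(a) directly is the cleaner route.
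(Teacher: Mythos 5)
Your proposal is correct and matches the paper's intent exactly: the statement is presented as a corollary of Theorem~\ref{th4}, and the whole content is the arithmetic check that $\frac13 > \ob = 1 - \frac{\sqrt2}{2} \approx 0.293$, which you verify correctly (and your specialized constants $f(p) = \frac89 p - \frac19 p^2$, $p_1 = \frac79$, other fixed point $-1$ are all right). Nothing further is needed.
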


A random closed set may not be effectively closed. But we can also
construct an effectively closed set with measure zero and with
positive capacity.

\begin{theorem}  \label{th5}
For the regular measure $\mu_d$ with $b = b_1 = b_2$, there is a
$\Pi^0_1$ class $Q$ with Lebesgue measure zero and positive capacity
$\T_d(Q).$
\end{theorem}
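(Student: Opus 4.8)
The plan is to build the tree $T$ for $Q$ by a direct construction, pruning branches fast enough that $[T]$ has Lebesgue measure zero while keeping enough of the tree alive, in a suitably ``spread out'' way, so that the $\T_d$-measure of the closed sets meeting $Q$ stays bounded below. Recall from Theorem~\ref{th4} and its proof that for the regular measure with $b_0=b_1=b$ and $b_2=1-2b$, the probability $p_n$ that two independently $\mu^*_d$-random closed sets meet at level $n$ satisfies $p_{n+1}=f(p_n)$ with $f(p)=(2b^2-4b+2)p-(1-2b)^2p^2$; the key structural fact I will exploit is that for \emph{any} fixed closed set $P$, the probability $\mu^*_d(V(P_{n+1}))$ that a random $K$ meets $P_{n+1}$ is obtained from $\mu^*_d(V(P_n))$ by applying, at each surviving node of $P_n$, the same one-step branching recursion --- so a set $P$ whose tree branches fully (both children) ``often enough'' will have $\T_d(P)>0$. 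Concretely, if at level $n$ the tree $T_P$ has the property that it looks like the full tree on a positive-density fragment, the level-$n$ hitting probability is bounded below by a quantity that does not go to zero.

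First I would make this precise: for a node $\sigma\in T$, write $h(\sigma)$ for the $\mu^*_d$-probability that a random $K$ meets $[T]\cap I(\sigma)$, normalized so $h(\sigma)=1$ if the subtree of $T$ above $\sigma$ is the full binary tree. One checks that if $\sigma$ has both children in $T$ then $h(\sigma)=1-(1-b\cdot h(\sigma\fr0))(1-b\cdot h(\sigma\fr1))-(1-2b)(\dots)$, i.e. $h(\sigma)$ is a fixed increasing polynomial $F$ in $(h(\sigma\fr0),h(\sigma\fr1))$ with $F(1,1)=$ the value of the one-step map at $1$, which exceeds a positive constant $c_b$ depending only on $b$; and if $\sigma$ has exactly one child $\sigma\fr i$ in $T$ then $h(\sigma)=b\cdot h(\sigma\fr i)+(1-2b)\cdot h(\sigma\fr i)=(1-b)\,h(\sigma\fr i)$. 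So every ``full-branching'' node contributes a multiplicative gain toward $1$ while every ``single-branching'' node costs a factor $(1-b)<1$. The strategy is therefore to design a computable tree $T$ in which, below each node, the two types of nodes are interleaved so that the infinite product of the per-node factors along the computation of $h(\lambda)$ converges to a positive limit, yet the ``single-branching'' nodes are frequent enough that $[T]$ has Lebesgue measure zero.

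Here is the construction. Work in blocks: in block $k$ (levels in some interval $[n_k,n_{k+1})$) allow every live node to branch fully \emph{except} at the last level of the block, where each live node is forced to have exactly one child, chosen by a fixed rule (say always the $0$-child). Then the Lebesgue measure of $[T]$ restricted to any level-$n_k$ cylinder is halved once per block, so $\mathrm{Leb}([T])=0$ as long as there are infinitely many blocks --- automatic. For the capacity: inside a full-branching block the recursion for $h$ only increases the value toward its ``full-tree'' value (it is the contracting recursion toward the positive fixed point $m_b$ of Theorem~\ref{th4}, Case depending on $b$, but the relevant point is just that $F$ maps $[c_b,1]^2$ into $[c_b,1]$ for a fixed $c_b>0$ once the block is long enough), and the forced single-branching level multiplies by exactly $(1-b)$. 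Choosing the block lengths $n_{k+1}-n_k$ to grow fast enough that the full-branching recursion inside block $k$ pulls the value back up to within $(1-b)^{2^{-k}}$ of its block-entry value --- i.e. the net factor across block $k$ is at least $(1-b)\cdot(1-b)^{-1+2^{-k}}=(1-b)^{2^{-k}}$ --- makes $\prod_k (1-b)^{2^{-k}}=(1-b)^{\sum_k 2^{-k}}=(1-b)^{2}>0$, hence $\T_d(Q)=h(\lambda)\ge(1-b)^2>0$. All the block lengths and branching rules are computable, so $T$ is computable and $Q=[T]$ is a $\Pi^0_1$ class. The verification that $V(Q)=\bigcap_n V(Q_n)$ with the $V(Q_n)$ decreasing, so $\T_d(Q)=\lim_n\mu^*_d(V(Q_n))$, is exactly part (iii) of the capacity definition together with the compactness argument already used in the proof of Theorem~\ref{th1}.

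The main obstacle will be step three: getting a clean quantitative bound showing that a long full-branching block really does recover the multiplicative loss of a forced single-branching level. The per-node recursion $h(\sigma)=F(h(\sigma\fr0),h(\sigma\fr1))$ is the two-variable analogue of the map $f$ from Theorem~\ref{th4}, and I would need the estimate that iterating it from a value $v\in(0,1)$ for $\ell$ levels lands within $C\lambda^\ell$ of the full-branching fixed point for some $\lambda<1$ and constant $C$ depending only on $b$ --- this is a routine but slightly delicate convergence-rate estimate for the polynomial recursion near its attracting fixed point, and it is the one place where the specific value $b=b_1=b_2$ (equivalently $b_2=1-2b$) and the bound $b<1/2$ get used in an essential, non-cosmetic way. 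Everything else --- computability of $T$, Lebesgue measure zero, and the passage from the finite-level hitting probabilities to $\T_d(Q)$ --- follows the template already established earlier in the paper.
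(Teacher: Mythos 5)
Your construction is, up to notation, exactly the paper's: the class you build is $Q=\{x: x(n_i)=0 \text{ for all } i\}$ for a sparse computable sequence $n_0<n_1<\cdots$, i.e.\ the tree that branches fully except at the levels $n_i$, where every node keeps only its $0$-child; and your per-node recursion, on a tree that is homogeneous level by level, collapses to the paper's single-variable recursion $\T_d(X_{n+1})=f(\T_d(X_n))$ with $f(p)=(2-2b)p-(1-2b)p^2$. The difference is bookkeeping, and the paper's version is worth adopting because it dissolves the step you flag as the main obstacle: no convergence-rate estimate is needed. The paper only uses that $f(p)-p=(1-2b)p(1-p)>0$ on $(0,1)$ and $f(1)=1$, so the iterates increase to $1$; hence $\lim_n \T_d(X_{\sigma\fr n})/\T_d(X_{\sigma})=1$, one greedily chooses $n_{k+1}$ so that $\T_d(X_{(n_0,\dots,n_{k+1})})\ge c_{k+1}$ for a fixed computable sequence $c_k\downarrow \frac12$, and then $\T_d(Q)=\lim_k\T_d(X_{(n_0,\dots,n_k)})\ge\frac12$ by clause (iii) of the capacity definition applied to the decreasing sets $X_{(n_0,\dots,n_k)}$ (which, unlike your level-$N$ truncations, do not depend on later choices). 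Your multiplicative-budget version can be made to work, but as stated it has a uniformity gap: the ``net factor across block $k$'' depends on the value of $h$ entering the block from above, which varies with the truncation level and with blocks not yet constructed; to close it you need something like $f^{(\ell)}(av)\ge f^{(\ell)}(a)\,v$ for $v\in[0,1]$ (an easy induction using $v^2\le v$), which makes the per-block factor at least $f^{(\ell_k-1)}(1-b)$ uniformly. Two small corrections: the attracting fixed point relevant here is $1$, the positive fixed point of $f(p)=(2-2b)p-(1-2b)p^2$, not $m_b$ from Theorem~\ref{th4}, which belongs to the different recursion governing when two random closed sets meet; and $F(1,1)=1$ exactly, not merely at least some $c_b>0$.
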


\begin{proof}
First let us compute the capacity of $X_n = \{x: x(n) =0\}$. For
$n=0$, we have $\T_d(X_0) = 1 - b$. That is, $Q$ meets $X_0$ if and
only if $Q_0 = I(0)$ (which occurs with probability $b$), or $Q_0 =
\TN$ (which occurs with probability $1 - 2b$. Now the probability
$\T_d(X_{n+1})$ that an arbitrary closed set $K$ meets $X_{n+1}$ may
be calculated in two distinct cases.  As in the proof of Theorem
\ref{th3}, let
\[
K_n = \bigcup \{I(\sigma): \sigma \in \{0,1\}^n\ \&\ K \cap I(\sigma)\neq \emptyset\}
\]

{\bf Case I} If $K_0 = \TN$, then $\T_d(X_{n+1}) = 1 - (1- \T_d(X_n))^2$.

\medskip

{\bf Case II} If $K_0 = I((i))$ for some $i<2$, then $\T_d(X_{n+1}) = \T_d(X_n)$.

\medskip

It follows that
\begin{align*}
\T_d(X_{n+1}) &= 2b \T_d(X_n) + (1-2b) (2 \T_d(X_n)
- (\T_d(X_n))^2) \\
&= (2-2b) \T_d(X_n) - (1-2b) (\T_d(X_n))^2
\end{align*}

Now consider the function $f(p) = (2-2b) p - (1-2b) p^2$, where $0 < b
 < \frac 12$. This function has the properties that $f(0) = 0$, $f(1) =
 1$ and $f(p) > p$ for $0 < p < 1$. Since $\T_d(X_{n+1})
= f(\T_d(X_n))$, it follows that $lim_n \T_d(X_n) = 1$ and is the limit
of a computable sequence.

For any $\sigma = (n_0, n_1, \dots, n_k)$, with $n_0 < n_1 < \cdots
<n_k$, similarly define $X_{\sigma} = \{x: (\forall i<k) x(n_i) =
0\}$.  A similar argument to that above shows that $lim_n
\T_d(X_{\sigma \fr n}) / \T_d(X_{\sigma}) = 1$.

Now consider the decreasing sequence $c_k = \frac{2^{k+1} +
1}{2^{k+2}}$ with limit $\frac12$.  Choose $n = n_0$ such that
$\T_d(X_n) \geq \frac 34 = c_0$ and for each $k$, choose $n = n_{k+1}$
such that $\T_d(X_{(n_0,\dots,n_k,n)}) \geq c_{k+1}$. This can be done
since $c_{k+1} < c_k$. Finally, let $Q = \bigcap_k
X_{(n_0,\dots,n_k)}$.  Then $\T_d(Q) = lim_k \T_d(X_{(n_0,\dots,n_k)})
\geq lim_k c_k=\frac12$.
\end{proof}

\section{Conclusions and Future Research}

In this paper, we have established a connection between measure and
capacity for the space $\C$ of closed subsets of $\TN$. We showed that
for a computable
measure $\mu^*$, a computable capacity may be defined by letting
$\T(Q)$ be the measure of the family of closed sets $K$ which have
nonempty intersection with $Q$. We have proved an effective
version of the Choquet's theorem by showing that every computable
capacity may be obtained from a computable measure in this way.

For the uniform measure $\mu$ under which a node $\sigma$ in $T$ has
exactly one immediate extension $\sigma \fr i$ with probability $b$
for $i=0,1$ (and hence $\sigma$ has both extensions with probability
$1-2b$), we have established conditions on $b$ that characterize when
the capacity of a random closed set equals zero or is $>0$. We have
also constructed for each such measure an effectively closed set with
positive capacity and with Lebesgue measure zero.

In future work, we plan to extend our results to more general measures
where for each string $\sigma \in T_Q$, the probability that $\sigma
\fr i \in T_Q$ depends on $\sigma$. For example, such a measure on the
space of closed sets may be defined  by making the probability that
both extensions $\sigma \fr i$ of a node $\sigma \in T$ belong to $T$
equal to $1 - \frac 2n$ and the probability that just one extension
belongs to $T$ equal to $\frac 1n$, where $n = |\sigma|$.

\end{document}